%
\documentclass{ws-fnl}
\usepackage{cite}
\begin{document}

\markboth{Authors' Names}{ (Paper's Title)}

\catchline{}{}{}{}{}

\title{Multifractal of mass function}

\author{Chenhui Qiang}

\address{Institute of Fundamental and Frontier Science, University of Electronic Science and Technology of China, 610054 Chengdu, China \\ Yingcai Honors College, University of Electronic Science and Technology of China, Chengdu, 610054, China\\
2019270101007@std.uestc.edu.cn}

\author{Yong Deng\footnote{Corresponding author}}

\address{Institute of Fundamental and Frontier Science, University of Electronic Science and Technology of China, 610054 Chengdu, China \\
School of Education, Shaanxi Normal University, Xi'an, 710062, China \\
School of Knowledge Science, Japan Advanced Institute of Science and Technology, Nomi, Ishikawa 923-1211, Japan \\
Department of Management, Technology, and Economics, ETH Zürich, Zurich, Switzerland \\
dengentropy@uestc.edu.cn}

\maketitle

\begin{history}
\received{(received date)}
\revised{(revised date)}
\end{history}
\begin{abstract}
Multifractal plays an important role in many fields. However, there is few attentions about mass function, which can better deal with uncertain information than probability. In this paper, we proposed multifractal of mass function. Firstly, the definition of multifractal spectrum of mass function is given. Secondly, the multifractal dimension of mass function is defined as $D_{\alpha}$. When mass function degenerates to probability distribution,  $D_{\alpha}$ degenerates to $d_{\alpha}$, which is information dimension proposed by Renyi. One interesting property is that the multifractal dimension of mass function with maximum Deng entropy is 1.585 no matter the order. Other interesting properties and numerical examples are shown to illustrate proposed model.

\keywords{Multifractal; Mass function; Renyi information dimension; Deng entropy}
\end{abstract}

\section{Introduction}
In recently years, much research has focus on fractal theory\ \cite{Mandelbrot1998The,kiew2020analysis} since many natural phenomenon\ \cite{yu2019landscape,liu2019integrated} and system\ \cite{harabagiu2009particle,namazi2018fractal} can be characterized by the fractal properties. Thus, fractal plays a vital role in many fields such as mechanical engineering\ \cite{zhao2016application,fang2019simulation}, geotechnical engineering\ \cite{li2021permeability,li2019characterization,guo2019anti}, oscillator model\ \cite{wang2021new,elias2021equivalent}, molecular structure\ \cite{siddiqui2019molecular,ding2017question} and so on. Lots of models about fractal dimension are proposed\ \cite{wen2021fractal,gires2017fractal}, which is the main parameter of measuring irregular objects. To better analyse the property of fractal, many models about multifractal\ \cite{lopes2009fractal,landais2019multifractal,sanyal2021tagore} were proposed, which as a generalization of fractal can better describe the variation of local features. For example, the characteristics of multifractal spectrum was analysed to study the stability of the China's stock market\ \cite{li2020institutional}. A generalization of classical multifractal detrended fluctuation analysis was proposed by Wang\ \cite{wang2021multifractal}. A method named multifractal cross wavelet analysis was developed to characterizes the properties of complex system\ \cite{jiang2017multifractal}.

How to deal with uncertain information has attracted a lot of attention\ \cite{wang2017rumor,Tsallis2021,cheong2019paradoxical}. Many models like probability theory\ \cite{liu2021probability,jiang2019new}, fuzzy sets\ \cite{senapati2020fermatean,kutlu2019spherical,xiao2021caftr}, evidence theory\ \cite{Xiao2021CEQD,Deng2020InformationVolume,zhou2021counter}, rough sets\ \cite{yan2018short,alcantud2019n} are developed. Mass function is a significant component of evidence theory\ \cite{dempster1967upper,shafer1976mathematical}  and has an advantage over probability distribution in dealing with uncertainty problem\ \cite{chen2021fire}. Lots of parameters of mass function were studied like divergence\ \cite{wang2021new1}, correlation coefficient\ \cite{jiang2018correlation,Qiang2021A}, distance\ \cite{xiao2020ced,han2016belief}. In addition, the uncertainty of mass function has been studied extensively\ \cite{Xue2021Interval,khalaj2020new}. The negation of mass function was developed by Gao\ \cite{gao2021generating} and Mao\ \cite{Mao2021IntervalNegation}. Some effective combination rule of conflict mass function were proposed\ \cite{Xiong2021InformationSciences,huang2020evidential,song2020self}. Transform mass function into probability is still a hot topic\ \cite{CHEN2021104438,huang2021basic}. However, there is few attentions paid to mass function from the perspective of multifractal. In this paper, the multifractal analysis of mass function are given including multifractal spectrum and multifractal dimension.

This article is organized as follows. Section 2 is a brief introduction about preliminaries. Multifractal spectrum of mass function is showed in Section 3. In Section 4, the definition of multifractal dimension is given. Detailed calculation and numerical examples are following. Finally is a simple conclusion for whole work. 

\section{Preliminaries}

\subsection{Multifractal spectrum}

Consider a measure space $(\chi,B(\chi),\mu )$. There is a lattice covering of $\chi$ by $d-$dimensional boxes of width $\delta _{n}$, where $B_{\delta _{n}}(x)$ is the box that contains the point $x$\ \cite{harte2001multifractals}. $U_{n}(x)$ is function with the condition $\delta _{n}\rightarrow 0$ as $n\rightarrow \infty $\ \cite{harte2001multifractals}.
\begin{equation}
    U_{n}(x)=-log\mu\left [B_{\delta _{n}}(x) \right ],\ if\  \mu\left [B_{\delta _{n}}(x) \right ] > 0\,. \label{1}
\end{equation}

Let $Y_{n}(x)$ be a rescaled version of $U_{n}(x)$ with the condition $\lim_{n\rightarrow \infty }Pr\left \{Y_{n}=y\right \}=0$ if $y\neq  y_{0}$\ \cite{harte2001multifractals}. 

\begin{equation}
     Y_{n}(x)=\frac{U_{n}(x)}{-log\delta_{n}}\,. \label{2}
\end{equation}

The multifractal spectrum is defined as follows\ \cite{harte2001multifractals}.
\begin{equation}
        \widetilde{f}(y)=\lim_{n\rightarrow \infty }\frac{logN_{n}}{-log\delta _{n}}+\lim_{\epsilon \rightarrow 0}\lim_{n \rightarrow 0}\frac{Pr\left \{|Y_{n}-y|<\epsilon \right \} }{-log\delta _{n}}\,. \label{3}
\end{equation}

Where $N_{n}$ is the number of boxes at the $n$th stage with positive $\mu$ measure\ \cite{harte2001multifractals}. The first term is the box counting dimension of the support of $\mu$ and the second term is the rate of the probability function $Y_{n}$ approaches zeros\ \cite{harte2001multifractals}.

\subsection{Renyi entropy and Renyi information dimension}
\subsubsection{Renyi entropy}

Entropy is an important measure in many fields\ \cite{babajanyan2020energy,Song2021powerset}. A classic entropy named Renyi entropy is defined as follows\ \cite{van2014renyi}.

Consider a probability distribution $P_{X}$ of a discrete random variable $X$: $P_{X}=\left \{p_{i}|i=1,2,...,n\right \}$, its Renyi entropy is,
\begin{equation}
    H_{\alpha}(P)=\frac{1}{1-\alpha}log\sum_{i}^{n}p_{i}^{\alpha}\,. \label{4}
\end{equation}
Where $\alpha \geq 0$. When $\alpha \rightarrow 1$, Renyi entropy degenerates to Shannon entropy\ \cite{lassance2021minimum}.

\begin{equation}
    \lim_{\alpha\rightarrow 1}H_{\alpha}(P)=\sum _{i}^{n}p_{i}logp_{i}=H_{S}\,. \label{5}
\end{equation}

\subsubsection{Renyi information dimension}
The dimension of the probability distribution of $X$ defined by Renyi is as follows\ \cite{duan2019new}.
\begin{equation}
    d_{\alpha}=\lim_{n\rightarrow \infty }\frac{H_{\alpha}(P)}{log_{2}n}\,. \label{6}
\end{equation}
When $\alpha \rightarrow 1$, $d_{\alpha}$ represents the rate of Shannon entropy grows with scale.
\subsection{Mass function}
Mass function assigns mass on power set and its definition is as follows.

Let $\Theta$ is the framework of discernment in evidence theory\ \cite{dempster1967upper,shafer1976mathematical},
\begin{equation}
\Theta=\{\theta _1,\cdots,\theta _n\}, n\in N^{+} \,. \label{7}
\end{equation}

Its power set is,
\begin{equation}
\begin{split}
2^{\Theta}=&\{A_{1},\ A_{2},...,\ A_{2^{n}}\} \\
=&\{\{\theta_1\},\cdots,\ \{\theta_1,\theta_{2}\},\cdots,\ \{\Theta\}, \ \emptyset\},\,. \label{8}
\end{split}
\end{equation}

Mass function is a mapping $ m:2^{\Theta}\rightarrow [0,1]$ satisfing\ \cite{dempster1967upper,shafer1976mathematical},

\begin{equation}
    m(\emptyset)=0,\, \label{9}
\end{equation}
\begin{equation}
   \quad \sum_{A_{i}\in 2^{\Theta}}m(A_{i})=1\,. \label{10}
\end{equation}

where $A_{i}$ is called focal element when $m(A_{i})>0$.

\subsection{Deng entropy}
Given a framework of discernment is $\Theta$ and a mass function is $m(A_{i})$. Deng entropy is defined as follows\ \cite{deng2020uncertainty}.
\begin{equation}
     H_{D}=-\sum_{A_{i} \in 2^{\Theta}}m(A_{i})log(\frac{m(A_{i})}{2^{|A_{i}|}-1})\,. \label{11}
\end{equation}

Where $|A_{i}|$ is the cardinal of focal element. When mass function satisfies the condition,
\begin{equation}
    m(A_{i})=\frac{2^{|A_{i}|}-1}{\sum _{i}2^{|A_{i}|}-1},\, \label{12}
\end{equation} 

Deng entropy reaches the maximum\ \cite{deng2020uncertainty}.
\begin{equation}
    H_{maxD}=-\sum_{i}m(A_{i})log(\frac{m(A_{i})}{2^{|A_{i}|}-1})=log\sum_{i}({2^{|A_{i}|}-1})\,. \label{13}
\end{equation}

Deng entropy considers the non-specificity of mass function and is studied widely\ \cite{Liao2020,DengeXtropy,kazemi2021fractional}.

 

\section{Multifractal spectrum of mass function}
We first generalizes the concept of multifractal spectrum of mass function, then some examples are shown to illustrate the proposed model.
\begin{definition}
A measure space in evidence theory is $(\Theta, 2^{\Theta}, m)$. Where $m$ is a mass function of $2^{\Theta}$. The multifractal spectrum is defined as follows.
\begin{equation}
    \widetilde{f}(y)=\frac{log_{2}N_{n}}{-log_{2}(2^{|\Theta|}-1)^{-1}}\,. \label{14}
\end{equation}

Where $N_{n}$ is the number of focal elements that have the same mass and $y$ is calculated as follows. 

\begin{equation}
    y(A_{i})=\frac{-log_{2}m(A_{i})}{-log_{2}(2^{|\Theta|}-1)^{-1}}\,. \label{15}
\end{equation}

$y$ can be explained as the rate at which mass function changes with respect to the size of $\Theta$. $\widetilde{f(y)}$ is a function of $y$ and following is a detailed process.
\end{definition}

\noindent \textbf {Example 1}: Consider a framework of discernment is $\Theta=\left \{\theta_{1},\theta_{2},\theta_{3} \right \}$. A mass function assignment with maximum Deng entropy is $m(\theta_{1})=m(\theta_{2})=m(\theta_{3})=1/19$, $m(\theta_{1},\theta_{2})=m(\theta_{1},\theta_{3})=m(\theta_{2},\theta_{3})=3/19$, $m(\theta_{1},\theta_{2},\theta_{3})=7/19$. According to Eq.~(\ref{14}) and Eq.~(\ref{15})

\begin{equation}
    y(\theta_{1})= y(\theta_{2})= y(\theta_{3})=\frac{-log_{2}(1/19)}{-log_{2}(2^{3}-1)^{-1}}=1.5131\,. \label{16}
\end{equation}
\begin{equation}
   y(\theta_{1},\theta_{2})=y(\theta_{1},\theta_{3})=y(\theta_{2},\theta_{3})=\frac{-log_{2}(3/19)}{-log_{2}(2^{3}-1)^{-1}}=0.9486\,. \label{17}
\end{equation}
\begin{equation}
    y(\theta_{1},\theta_{2},\theta_{3})=\frac{-log_{2}(1/19)}{-log_{2}(2^{3}-1)^{-1}}=0.5131\,. \label{18}
\end{equation}
\begin{equation}
    \widetilde{f}(y(\theta_{1}))= \widetilde{f}(y(\theta_{2}))=\widetilde{f}( y(\theta_{3}))=\frac{log_{2}(3)}{-log_{2}(2^{3}-1)^{-1}}=0.5646\,. \label{19}
\end{equation}
\begin{equation}
   \widetilde{f}(y(\theta_{1},\theta_{2}))=\widetilde{f}(y(\theta_{1},\theta_{3}))=\widetilde{f}(y(\theta_{2},\theta_{3}))=\frac{log_{2}(3)}{-log_{2}(2^{3}-1)^{-1}}=0.5646\,. \label{20}
\end{equation}
\begin{equation}
    \widetilde{f}(y(\theta_{1},\theta_{2},\theta_{3}))=\frac{log_{2}(1)}{-log_{2}(2^{3}-1)^{-1}}=0\,. \label{21}
\end{equation}

From above calculations, if the cardinal of the focal elements is equal, their mass functions are the same and have the same $y$. These pairs of points are drawn in Fig. 1.
\begin{figure}[th]
\centerline{\includegraphics[width=10cm]{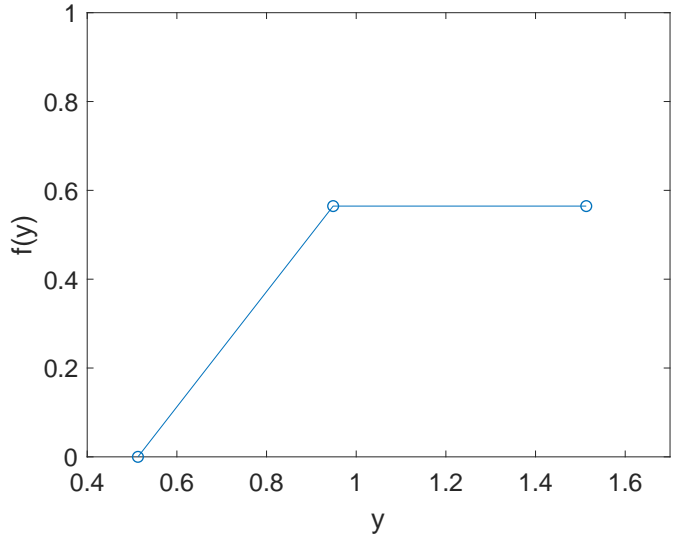}}
\vspace*{8pt}
\caption{The multifractal spectrum of Example 1}
\end{figure}

\noindent \textbf {Example 2}: Consider a framework of discernment is $\Theta$, $|\Theta|=2,...,n$. A series of mass functions assigned with maximum Deng entropy are $m_{i}$, $i=2,...,n$. The multifractal spectrums are shown in Fig. 2. and Fig. 3. Part of the values of $y$ and $\widetilde{f}(y)$ are shown in Table 1 and Table 2.
\begin{figure}[th]
\centerline{\includegraphics[width=10cm]{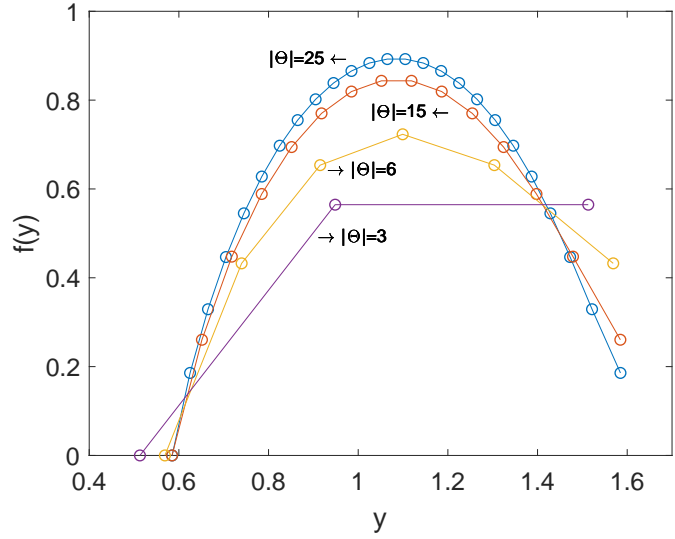}}
\vspace*{8pt}
\caption{The multifractal spectrum of Example 2 when $|\Theta|=3,6,15,25$}
\end{figure}

\begin{figure}[th]
\centerline{\includegraphics[width=10cm]{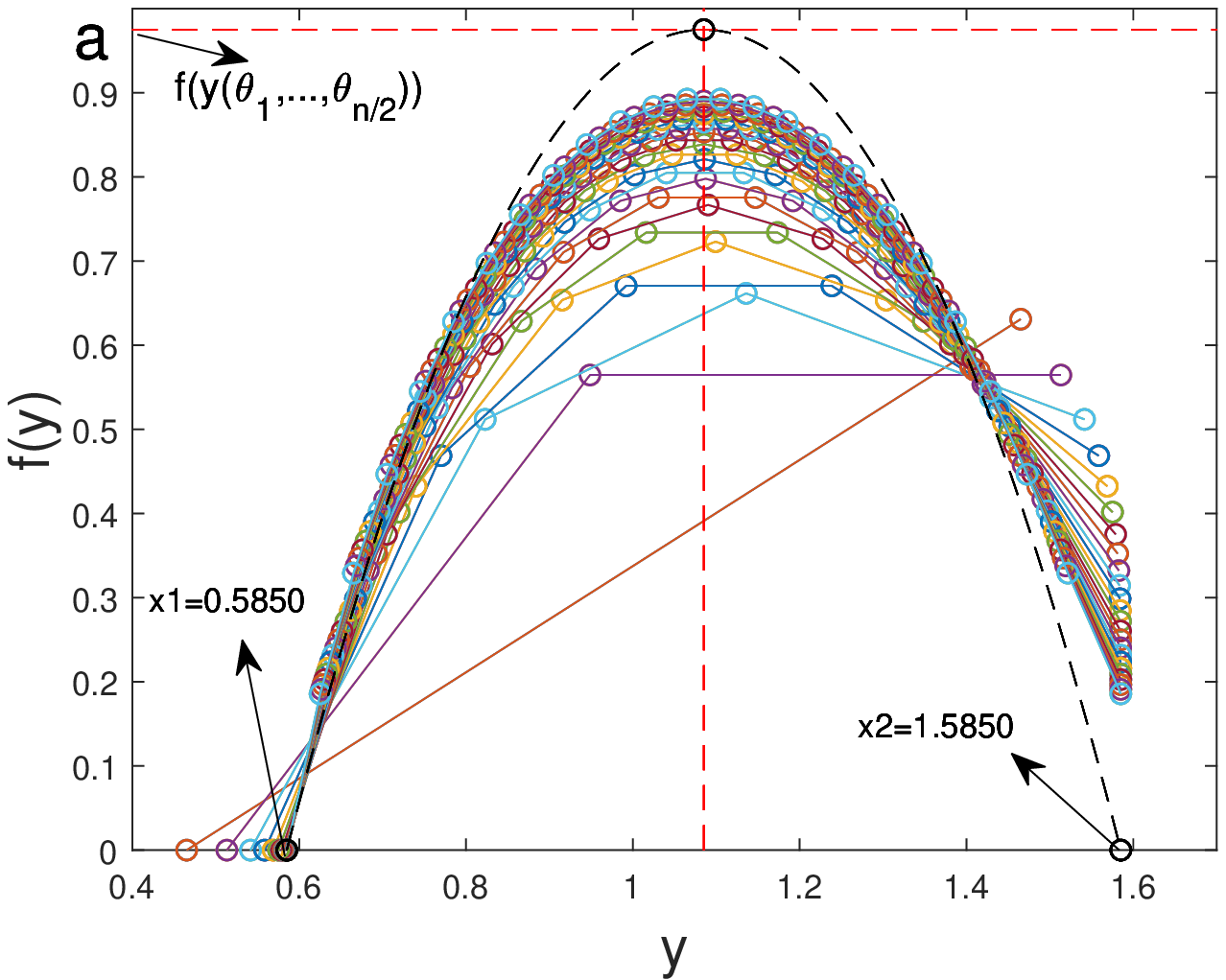}}
\vspace*{8pt}
\caption{The multifractal spectrum of Example 2 with $|\Theta|=1,...,25$}
\end{figure}

\begin{table}[pt]
\tbl{The value of $y$ with the increase of $|\Theta|$ of Example 2.}
{\begin{tabular}{@{}cccccccc@{}} \toprule
& $y(\theta_{1})$&
$y(\theta_{1},\theta_{2})$ &
$y(\theta_{1},...,\theta_{3})$ &
$y(\theta_{1},...,\theta_{4})$ &
$y(\theta_{1},...,\theta_{5})$ &
$y(\theta_{1},...,\theta_{6})$ &
$...$ \\
$|\Theta|=2$\hphantom{00} & \hphantom{0}1.4650 & \hphantom{0}0.4650 \\
$|\Theta|=3$\hphantom{00} & \hphantom{0}1.5131 & \hphantom{0}0.9486 & \hphantom{0}0.5131 \\
$|\Theta|=4$\hphantom{00} & \hphantom{0}1.5415 & \hphantom{0}1.1358 & \hphantom{0}0.8229 &\hphantom{0}0.5415 \\
$|\Theta|=5$\hphantom{00} & \hphantom{0}1.5585 & \hphantom{0}1.2386 & \hphantom{0}0.9918 &\hphantom{0}0.7699 &\hphantom{0}0.5585 \\
$|\Theta|=6$\hphantom{00} & \hphantom{0}1.5688 & \hphantom{0}1.3036 & \hphantom{0}1.0991 &\hphantom{0}0.9152 &\hphantom{0}0.7400 &\hphantom{0}0.5688 \\
\botrule
\end{tabular}}
\end{table}

\begin{table}[pt]
\tbl{The value of $\widetilde{f}(y)$ with the increase of $|\Theta|$ of Example 2.}
{\begin{tabular}{@{}cccccccc@{}} \toprule
& $\widetilde{f}(y(\theta_{1}))$&
$\widetilde{f}(y(\theta_{1},\theta_{2}))$ &
$\widetilde{f}(y(\theta_{1},...,\theta_{3}))$ &
$\widetilde{f}(y(\theta_{1},...,\theta_{4}))$ &
$\widetilde{f}(y(\theta_{1},...,\theta_{5}))$ &
$\widetilde{f}(y(\theta_{1},...,\theta_{6}))$ &
$...$ \\
$|\Theta|=2$\hphantom{00} & \hphantom{0}0.6309 & \hphantom{0}0 \\
$|\Theta|=3$\hphantom{00} & \hphantom{0}0.5646 & \hphantom{0}0.5646 & \hphantom{0}0 \\
$|\Theta|=4$\hphantom{00} & \hphantom{0}0.5119 & \hphantom{0}0.6616 & \hphantom{0}0.5119 &\hphantom{0}0 \\
$|\Theta|=5$\hphantom{00} & \hphantom{0}0.4687 & \hphantom{0}0.6705 & \hphantom{0}0.6705 &\hphantom{0}0.4687 &\hphantom{0}0 \\
$|\Theta|=6$\hphantom{00} & \hphantom{0}0.4325 & \hphantom{0}0.6536 & \hphantom{0}0.7231 &\hphantom{0}0.6536 &\hphantom{0}0.4325 &\hphantom{0}0 \\
\botrule
\end{tabular}}
\end{table}

As can be seen from Fig. 3, the multifractal spectrum in this example is plotted by solid colored lines. The black dotted line is the case that $\lim_{|\Theta|\rightarrow \infty }$. In other words, the multifractal spectrum of of mass function with maximum Deng entropy when $n \rightarrow \infty$ can be approximated by a quadratic function $F(x)=-a(x-0.585)(x-1.585)$, where $a=\frac{4log_{2}C_{n}^{\frac{n}{2}}}{n}$. Proof is as follows. According to Eq.~(\ref{12}), Eq.~(\ref{14}) and Eq.~(\ref{15}).

\begin{equation}
   \lim_{n\rightarrow \infty }m(\theta_{1})=\frac{2^{1}-1}{C_{n}^{1}(2^{1}-1)+C_{n}^{2}(2^{2}-1)+...+C_{n}^{n}(2^{n}-1)}=\frac{1}{3^{n}-2^{n}}\,. \label{22}
\end{equation}

\begin{equation}
   \lim_{n\rightarrow \infty }m(\theta_{1},...,\theta_{\frac{n}{2}})=\frac{2^{\frac{n}{2}}-1}{C_{n}^{1}(2^{1}-1)+C_{n}^{2}(2^{2}-1)+...+C_{n}^{n}(2^{n}-1)}=\frac{2^{\frac{n}{2}}-1}{3^{n}-2^{n}}\,. \label{221}
\end{equation}

\begin{equation}
   \lim_{n\rightarrow \infty }m(\theta_{1},...,\theta_{n})=\frac{2^{n}-1}{C_{n}^{1}(2^{1}-1)+C_{n}^{2}(2^{2}-1)+...+C_{n}^{n}(2^{n}-1)}=\frac{2^{n}-1}{3^{n}-2^{n}}\,. \label{23}
\end{equation}

\begin{equation}
\begin{split}
   \lim_{n\rightarrow \infty }y(\theta_{1})=\frac{-log_{2}(3^{n}-2^{n})^{-1}}{-log_{2}(2^{n}-1)^{-1}}=\frac{log_{2}(2^{n})+log_{2}(\frac{3}{2}^{n}-1)}{log_{2}(2^{n}-1)}\approx 1.5850\,. \label{24}
\end{split}
\end{equation}

\begin{equation}
    \lim_{n\rightarrow \infty }y(\theta_{1},...,\theta_{\frac{n}{2}})=\frac{-log_{2}(\frac{2^{\frac{n}{2}}-1}{3^{n}-2^{n}})}{-log_{2}(2^{n}-1)^{-1}}=\frac{-log_{2}(2^{\frac{n}{2}})+log_{2}(\frac{3}{2}^{n}-1)}{log_{2}(2^{n}-1)}\approx1.0850\,. \label{251}
\end{equation}

\begin{equation}
    \lim_{n\rightarrow \infty }y(\theta_{1},...,\theta_{n})=\frac{-log_{2}(\frac{2^{n}-1}{3^{n}-2^{n}})}{-log_{2}(2^{n}-1)^{-1}}=0.5850\,. \label{25}
\end{equation}

\begin{equation}
    \lim_{n\rightarrow \infty }\widetilde{f}(y(\theta_{1})=\frac{log_{2}(C_{n}^{1})}{-log_{2}(2^{n}-1)^{-1}}\approx\frac{log_{2}(n)}{n}\approx0\,. \label{26}
\end{equation}

\begin{equation}
    \lim_{n\rightarrow \infty }\widetilde{f}(y(\theta_{1},...,\theta_{\frac{n}{2}}))=\frac{log_{2}(C_{n}^{\frac{n}{2}})}{-log_{2}(2^{n}-1)^{-1}}=\approx \frac{log_{2}(C_{n}^{\frac{n}{2}})}{n}\,. \label{261}
\end{equation}

\begin{equation}
    \lim_{n\rightarrow \infty }\widetilde{f}(y(\theta_{1},...,\theta_{n}))=\frac{log_{2}(C_{n}^{n})}{-log_{2}(2^{n}-1)^{-1}}\approx\frac{log_{2}(1)}{n}=0\,. \label{27}
\end{equation}

$$\left \{(y(\theta_{1}),\widetilde{f}(y(\theta_{1}))),(1.5850,0)\right \}$$ 
$$\left \{(y(\theta_{1},...,\theta_{n}),\widetilde{f}(y(\theta_{1},\theta_{n}))),(0.5850,0)\right \}$$

There are three points $(0.5850,0)$, $(1.0850,\frac{log_{2}(C_{n}^{\frac{n}{2}})}{n})$, $(1.5850,0)$, thus the quadratic function is calculated as $F(x)=-a(x-0.585)(x-1.585)$, where $a=\frac{4log_{2}C_{n}^{\frac{n}{2}}}{n}$.

\noindent \textbf {Example 3}: Consider a framework of discernment is $\Theta$, $|\Theta|=2,...,n$. A mass function averagely assigned on $2^{\Theta}$ is $m_{i}(A)=\frac{1}{2^{|\Theta|}-1}$, $i=2,...,n$. The multifractal spectrums is only one point $(1,1)$ in Fig. 4.

\begin{equation}
    y(\theta_{1})=...=y(\theta_{1},\theta_{2})=...=y(\theta_{1},...,\theta_{n})=\frac{-log_{2}(2^{n}-1)^{-1}}{-log_{2}(2^{n}-1)^{-1}}=1\,. \label{28}
\end{equation}

\begin{equation}
    \widetilde{f}(y(\theta_{1}))=...=\widetilde{f}(y(\theta_{1},...,\theta_{n}))=\frac{log_{2}(2^{n}-1)}{-log_{2}(2^{n}-1)^{-1}}=1\,. \label{29}
\end{equation}

\begin{figure}[th]
\centerline{\includegraphics[width=10cm]{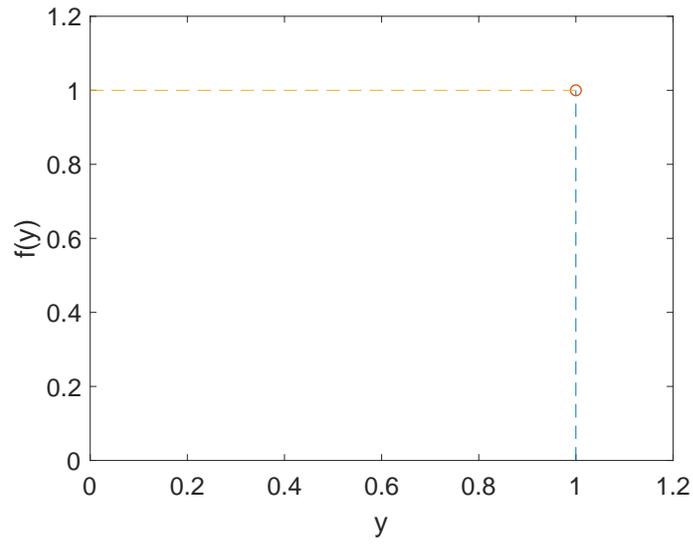}}
\vspace*{8pt}
\caption{The multifractal spectrum of Example 3 with $|\Theta|=1,...,25$}
\end{figure}

\noindent \textbf {Example 4}: Consider a framework of discernment is $\Theta$, $|\Theta|=2,...,n$. A mass function is $m_{i}(\Theta)=1$, $i=2,...,n$. The multifractal spectrums is only one point $(0,0)$ in Fig. 5.

\begin{equation}
    y(\Theta)=\frac{-log_{2}(1)^{-1}}{-log_{2}(2^{n}-1)^{-1}}=0\,. \label{30}
\end{equation}

\begin{equation}
    \widetilde{f}(y(\Theta))=\frac{log_{2}(1)}{-log_{2}(2^{n}-1)^{-1}}=0\,. \label{31}
\end{equation}

\begin{figure}[th]
\centerline{\includegraphics[width=10cm]{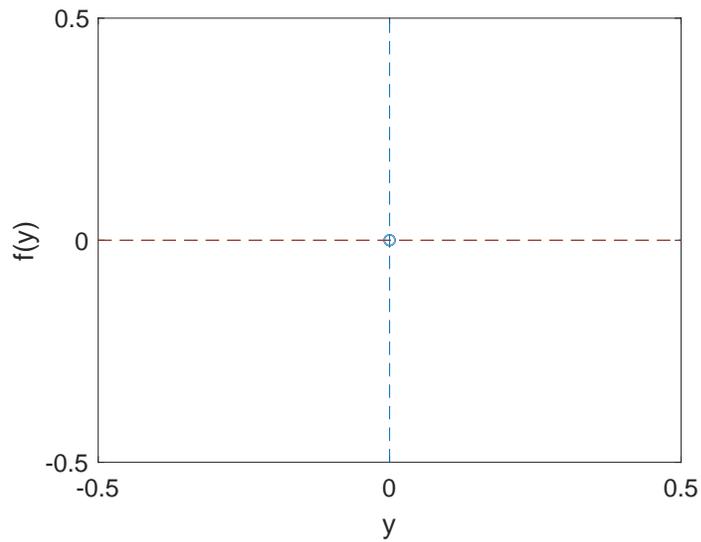}}
\vspace*{8pt}
\caption{The multifractal spectrum of Example 4 with $|\Theta|=1,...,25$}
\end{figure}

\section{Multifractal dimension of mass function}
\begin{definition}
The multifractal dimension of mass function is defined as follows.
\begin{equation}
     D_{\alpha}(m(A_{i}))=\frac{\frac{1}{1-\alpha }log_{2}\left [ \sum_{i}(\frac{m(A_{i})}{2^{|A_{i}|}-1|})^{\alpha }(2^{|A_{i}|}-1)\right ]}{log_{2}\sum_{i}(2^{|A_{i}|}-1)^{m(A_{i})^{\alpha}}}\,. \label{32}
\end{equation}
\end{definition}

\begin{theorem}
$ \lim_{\alpha\rightarrow 1}D_{\alpha}(m(A_{i}))=\frac{-\sum_{i} m(A_{i})log_{2}(\frac{m(A_{i})}{2^{|A_{i}|}-1})}{log_{2}\sum_{i}(2^{|A_{i}|}-1)^{m(A_{i})}}$, where the numerator is Deng entropy.
\end{theorem}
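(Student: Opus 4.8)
The plan is to evaluate the numerator and denominator of $D_{\alpha}$ separately as $\alpha\to 1$ and then recombine, since the denominator tends to a finite nonzero constant. First I would dispatch the denominator $\log_{2}\sum_{i}(2^{|A_{i}|}-1)^{m(A_{i})^{\alpha}}$ by continuity: as $\alpha\to 1$ the exponents satisfy $m(A_{i})^{\alpha}\to m(A_{i})$, so the expression tends to $\log_{2}\sum_{i}(2^{|A_{i}|}-1)^{m(A_{i})}$, which is exactly the denominator in the claimed limit and is strictly positive whenever there is more than one focal element. This step is routine.

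The heart of the argument is the numerator $N(\alpha)=\frac{1}{1-\alpha}\log_{2}\bigl[\sum_{i}(\frac{m(A_{i})}{2^{|A_{i}|}-1})^{\alpha}(2^{|A_{i}|}-1)\bigr]$. Writing $g(\alpha)=\sum_{i}(\frac{m(A_{i})}{2^{|A_{i}|}-1})^{\alpha}(2^{|A_{i}|}-1)$, I would first verify that $g(1)=\sum_{i}m(A_{i})=1$ by the normalization condition Eq.~(\ref{10}), so that $\log_{2}g(1)=0$ while $1-\alpha\to 0$. Thus $N(\alpha)$ is a genuine $0/0$ indeterminate form; this is precisely the mechanism by which Renyi entropy collapses to Shannon entropy, so L'Hopital's rule is the natural tool.

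Then I would apply L'Hopital in $\alpha$. The derivative of $1-\alpha$ is $-1$, and the derivative of $\log_{2}g(\alpha)$ is $\frac{g'(\alpha)}{g(\alpha)\ln 2}$, where $g'(\alpha)=\sum_{i}(\frac{m(A_{i})}{2^{|A_{i}|}-1})^{\alpha}\ln(\frac{m(A_{i})}{2^{|A_{i}|}-1})(2^{|A_{i}|}-1)$. Evaluating at $\alpha=1$, the factor $(\frac{m(A_{i})}{2^{|A_{i}|}-1})(2^{|A_{i}|}-1)$ collapses to $m(A_{i})$, giving $g'(1)=\sum_{i}m(A_{i})\ln(\frac{m(A_{i})}{2^{|A_{i}|}-1})$ and $g(1)=1$. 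Hence the numerator limit equals $-\frac{1}{\ln 2}\sum_{i}m(A_{i})\ln(\frac{m(A_{i})}{2^{|A_{i}|}-1})=-\sum_{i}m(A_{i})\log_{2}(\frac{m(A_{i})}{2^{|A_{i}|}-1})$, which is exactly Deng entropy, Eq.~(\ref{11}). Combining with the denominator limit via the quotient rule for limits yields the stated formula.

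The main obstacle is the bookkeeping in the L'Hopital step: one must apply the chain rule correctly to $\log_{2}g(\alpha)$, retain the $\ln 2$ conversion factor from differentiating the base-$2$ logarithm, and differentiate the power $(\cdot)^{\alpha}$ with respect to its exponent (which produces the $\ln(\cdot)$ factor). The single conceptual point to confirm is $g(1)=1$, which rests entirely on the mass-normalization constraint; without it the form would not be $0/0$ and the limit would not reduce to an entropy.
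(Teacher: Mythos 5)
Your proof is correct, and it reaches the paper's formula by a slightly different decomposition than the paper's own proof. The paper applies L'Hopital to the entire quotient at once, writing $D_{\alpha}=\frac{\log_{2}g(\alpha)}{(1-\alpha)\log_{2}h(\alpha)}$ with $h(\alpha)=\sum_{i}(2^{|A_{i}|}-1)^{m(A_{i})^{\alpha}}$, so that differentiating the denominator by the product rule produces the extra term $(1-\alpha)T$ with $T=\bigl[\log_{2}h(\alpha)\bigr]'$, which the paper then dismisses in a remark after the proof (asserting, without checking that $T$ stays bounded near $\alpha=1$, that the product tends to $0$). You instead dispatch $\log_{2}h(\alpha)$ first by continuity --- legitimate since it converges to the finite constant $\log_{2}\sum_{i}(2^{|A_{i}|}-1)^{m(A_{i})}$ --- and confine L'Hopital to the genuine $0/0$ core $\frac{\log_{2}g(\alpha)}{1-\alpha}$; both arguments pivot on the same key fact, $g(1)=\sum_{i}m(A_{i})=1$ from the normalization Eq.~(\ref{10}). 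Your route buys a cleaner proof: the $(1-\alpha)T$ term never arises, and your explicit chain-rule and $\ln 2$ bookkeeping also repairs a small sloppiness in the paper's displayed derivative, which writes $g'(\alpha)$ (with $\log_{2}$ in place of $\ln$) but omits the division by $g(\alpha)$ in $\frac{g'(\alpha)}{g(\alpha)\ln 2}$ --- harmless only because $g(1)=1$, a point you make explicit. Your positivity caveat on the denominator is likewise something the paper leaves unstated: the claimed limit is vacuous ($0/0$) exactly when the sole focal element is a singleton carrying mass one; otherwise the denominator is indeed positive (in fact also for a single non-singleton focal element, so your ``more than one focal element'' condition is slightly conservative but safe).
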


\begin{proof}
\begin{equation}
\begin{split}
    \lim_{\alpha \rightarrow 1}D_{\alpha}(m(A_{i}))=& \lim_{\alpha \rightarrow 1}\frac{\frac{\partial }{\partial \alpha} \left [ log_{2} \sum_{i}(\frac{m(A_{i})}{2^{|A_{i}|}-1|})^{\alpha }(2^{|A_{i}|}-1)\right]}{\frac{\partial }{\partial \alpha} \left [(1-\alpha) (log_{2}\sum_{i}(2^{|A_{i}|}-1)^{m(A_{i})^{\alpha}}) \right ]}\\=&\frac{ \sum_{i}\left [ (\frac{m(A_{i})}{2^{|A_{i}|}-1|})^{\alpha }(2^{|A_{i}|}-1)log_{2}(\frac{m(A_{i})}{2^{|A_{i}|}-1})\right ]}{(-log_{2}\sum_{i}(2^{|A_{i}|}-1)^{m(A_{i})^{\alpha}})+(1-\alpha)T}\\=&\frac{-\sum_{i} m(A_{i})log_{2}(\frac{m(A_{i})}{2^{|A_{i}|}-1})}{log_{2}\sum_{i}(2^{|A_{i}|}-1)^{m(A_{i})}} \,. \label{33}
\end{split}
\end{equation}
\end{proof}

Where $T$ is $\left [ log_{2}\sum_{i}(2^{|A_{i}|}-1)^{m(A_{i})^{\alpha}}\right ]^{'}$, which multipled by $(1-\alpha)$ tend to be 0 when $\alpha \rightarrow 1$.

\begin{theorem}
When mass function degenerates to probability distribution, the proposed multifractal dimension degenerates to Renyi information dimension.

\end{theorem}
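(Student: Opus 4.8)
The plan is to identify precisely what it means for the mass function to degenerate to a probability distribution and then perform a direct substitution into Eq.~(\ref{32}). A mass function reduces to an ordinary probability distribution exactly when every focal element is a singleton, i.e.\ $|A_i|=1$ for all $i$ with $m(A_i)>0$; in this case $m(A_i)$ plays the role of the probability $p_i$. The crucial simplification is that the cardinality factor collapses, since $2^{|A_i|}-1 = 2^1-1 = 1$ for every focal element. So the entire argument amounts to setting $2^{|A_i|}-1=1$ throughout the definition of $D_\alpha$ and matching each resulting piece against the Renyi information dimension of Eq.~(\ref{6}).

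First I would treat the numerator of Eq.~(\ref{32}). Substituting $2^{|A_i|}-1=1$ turns $\sum_i\bigl(m(A_i)/(2^{|A_i|}-1)\bigr)^{\alpha}(2^{|A_i|}-1)$ into $\sum_i p_i^{\alpha}$, so the numerator becomes $\tfrac{1}{1-\alpha}\log_2\sum_i p_i^{\alpha}$, which is exactly the Renyi entropy $H_\alpha(P)$ of Eq.~(\ref{4}). Next I would treat the denominator. With $2^{|A_i|}-1=1$, each summand $(2^{|A_i|}-1)^{m(A_i)^{\alpha}}$ equals $1^{p_i^{\alpha}}=1$, hence $\sum_i(2^{|A_i|}-1)^{m(A_i)^{\alpha}}=n$, where $n$ is the number of singletons, and the denominator reduces to $\log_2 n$.

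Combining the two computations gives $D_\alpha = H_\alpha(P)/\log_2 n$, which is precisely the finite-$n$ form appearing in Eq.~(\ref{6}); passing to the limit $n\to\infty$ then recovers $d_\alpha$ exactly. I expect no genuinely hard analytic step here, since the argument is essentially a verification. The only points requiring care are (i) justifying that ``degenerates to a probability distribution'' is correctly modelled by the all-singletons condition, so that $2^{|A_i|}-1=1$ simultaneously trivialises both the cardinality weight and the exponentiated base, and (ii) reconciling the limit $n\to\infty$ present in Renyi's definition~(\ref{6}) with its absence in the definition of $D_\alpha$, which is handled by observing that the algebraic identity $D_\alpha=H_\alpha(P)/\log_2 n$ holds for each finite $n$, so the limit is inherited term by term.
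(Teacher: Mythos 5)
Your proof is correct and follows essentially the same route as the paper's: both arguments substitute the all-singletons condition $|A_i|=1$ into Eq.~(\ref{32}), so that $2^{|A_i|}-1=1$ collapses the numerator to the Renyi entropy $H_{\alpha}(P)$ of Eq.~(\ref{4}) and the denominator to $\log_{2}n$, yielding $d_{\alpha}$. Your explicit reconciliation of the limit $n\rightarrow\infty$ in Eq.~(\ref{6}) with the finite-$n$ identity is a small point of care that the paper's proof passes over silently, but the underlying computation is identical.
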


\begin{proof}
\begin{equation}
\begin{split}
     D_{\alpha}(m_{i}\rightarrow P_{i})=&\frac{\frac{1}{1-\alpha }log_{2}\left [ \sum_{i}(\frac{p_{i}}{2^{1}-1})^{\alpha }(2^{1}-1)\right ]}{log_{2}\sum_{i}(2^{1}-1)^{p_{i}^{\alpha}}}\\=&\frac{\frac{1}{1-\alpha}log_{2}\sum_{i}p_{i}^{\alpha}}{log_{2}n}\\=&d_{\alpha}\,. \label{34}
\end{split}
\end{equation}
\end{proof}

\begin{theorem}
Let $\Theta$ has $n$ elements, a mass function with maximum Deng entropy is $m(A_{i})=\frac{2^{|A_{i}|-1}}{\sum_{i}2^{|A_{i}|}-1}$, $\lim_{n\rightarrow \infty }D_{\alpha}=1.585$, where $ \alpha \in R$.

\end{theorem}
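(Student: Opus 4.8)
The plan is to evaluate the numerator and denominator of $D_{\alpha}$ separately for the maximum-Deng-entropy assignment $m(A_{i})=\frac{2^{|A_{i}|}-1}{S}$, where $S=\sum_{i}(2^{|A_{i}|}-1)=\sum_{k=1}^{n}\binom{n}{k}(2^{k}-1)=3^{n}-2^{n}$ (this identity, coming from $\sum_{k=0}^{n}\binom{n}{k}2^{k}=3^{n}$ and $\sum_{k=0}^{n}\binom{n}{k}=2^{n}$, is already used in Example~2). The structural key is that for this specific mass function the ratio $\frac{m(A_{i})}{2^{|A_{i}|}-1}=\frac{1}{S}$ is identical for every focal element, independent of its cardinality.

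First I would compute the numerator exactly. Substituting $\frac{m(A_{i})}{2^{|A_{i}|}-1}=S^{-1}$ collapses the inner sum to $\sum_{i}S^{-\alpha}(2^{|A_{i}|}-1)=S^{-\alpha}\cdot S=S^{1-\alpha}$, so the numerator becomes $\frac{1}{1-\alpha}\log_{2}S^{1-\alpha}=\log_{2}S$ for every $\alpha\neq 1$ (and at $\alpha=1$ by continuity, consistent with Theorem~1 since $H_{maxD}=\log_{2}S$). Hence the numerator equals $\log_{2}(3^{n}-2^{n})=n\log_{2}3+\log_{2}(1-(2/3)^{n})\sim n\log_{2}3$ as $n\to\infty$, and remarkably it does not depend on $\alpha$ at all.

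Next I would attack the denominator $\log_{2}\big[\sum_{k=1}^{n}\binom{n}{k}(2^{k}-1)^{((2^{k}-1)/S)^{\alpha}}\big]$ by a squeeze, which is where I expect the real work. For $\alpha>0$ each term satisfies $1\le (2^{k}-1)^{((2^{k}-1)/S)^{\alpha}}\le M_{n}:=(2^{n}-1)^{((2^{n}-1)/S)^{\alpha}}$, the upper bound holding because $x\mapsto (x/S)^{\alpha}\log x$ is increasing on $[1,S]$, so the largest term sits at $k=n$. Summing against $\binom{n}{k}$ gives $2^{n}-1\le \sum_{k}\binom{n}{k}(\cdots)\le M_{n}(2^{n}-1)$, hence $\log_{2}(2^{n}-1)\le\text{denominator}\le\log_{2}M_{n}+\log_{2}(2^{n}-1)$. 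Since $\log_{2}M_{n}=((2^{n}-1)/S)^{\alpha}\log_{2}(2^{n}-1)\sim n(2/3)^{\alpha n}\to 0$, the denominator is $\sim n$, and therefore $D_{\alpha}\sim \frac{n\log_{2}3}{n}=\log_{2}3=1.585$.

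The main obstacle is twofold: establishing the monotonicity and boundedness that drive the squeeze in the denominator, and honestly delimiting the range of $\alpha$. The argument above genuinely requires $\alpha>0$, since the factor $((2^{n}-1)/S)^{\alpha}\sim(2/3)^{\alpha n}$ decays to zero only in that regime. At $\alpha=0$ the denominator's inner sum reduces to $\sum_{i}(2^{|A_{i}|}-1)=S$, forcing $D_{0}=1\neq 1.585$; and for $\alpha<0$ the exponent $((2^{k}-1)/S)^{\alpha}$ grows, the term at the smallest admissible cardinality dominates, the denominator blows up, and $D_{\alpha}\to 0$. I would therefore restrict the statement to $\alpha>0$ (or record the constant and vanishing boundary cases as explicit exceptions) rather than claim the limit $1.585$ for all $\alpha\in R$.
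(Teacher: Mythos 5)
Your proposal is correct and, at the skeleton level, matches the paper's proof of Theorem 3: both exploit the fact that $\frac{m(A_{i})}{2^{|A_{i}|}-1}=\frac{1}{S}$ with $S=3^{n}-2^{n}$, so that the numerator collapses exactly to $\frac{1}{1-\alpha}\log_{2}S^{1-\alpha}=\log_{2}S\sim n\log_{2}3$ independently of $\alpha$, and both then argue that the denominator behaves like $\log_{2}(2^{n}-1)\sim n$, yielding $\log_{2}3\approx 1.585$. Where you genuinely differ is in rigor on the denominator. The paper asserts termwise that $(2^{|A_{i}|}-1)^{((2^{|A_{i}|}-1)/S)^{\alpha}}\approx 1$ and multiplies by the count of terms $2^{n}-1$, with no control of the error across a sum of exponentially many terms that are not uniformly close to $1$. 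Your squeeze supplies exactly the missing lemma: monotonicity of $x\mapsto (x/S)^{\alpha}\log_{2}x$ on $[1,S]$ (valid since its derivative is $\frac{x^{\alpha-1}}{S^{\alpha}}(\alpha\ln x+1)\cdot\log_{2}e>0$ for $x\geq 1$, $\alpha>0$) pins every term in $[1,M_{n}]$ with $\log_{2}M_{n}=\bigl(\frac{2^{n}-1}{S}\bigr)^{\alpha}\log_{2}(2^{n}-1)\sim n(2/3)^{\alpha n}\to 0$, which turns the paper's heuristic into a proof.

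More importantly, your delimitation of $\alpha$ exposes a genuine error in the theorem as stated: the claim for all $\alpha\in R$ is false, and the paper's step $\bigl(\frac{2^{|A_{i}|}-1}{S}\bigr)^{\alpha}\to 0$ silently presupposes $\alpha>0$. At $\alpha=0$ the numerator and denominator both equal $\log_{2}S$ exactly, so $D_{0}=1$ for every $n$; for $\alpha<0$ the exponents $\bigl(\frac{2^{k}-1}{S}\bigr)^{\alpha}$ diverge, the denominator blows up doubly exponentially against a numerator of order $n$, and $D_{\alpha}\to 0$. One small correction to your boundary analysis: for $\alpha<0$ the dominant term sits at cardinality $k=2$, not at the smallest cardinality $k=1$, since the $k=1$ base is $2^{1}-1=1$ and those terms remain equal to $1$; the conclusion $D_{\alpha}\to 0$ is unaffected. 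So your restriction to $\alpha>0$ is not pedantry but a necessary correction to the statement, and your version is strictly stronger than the paper's argument.
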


\begin{proof}
\begin{equation}
\begin{split}
     \lim_{n\rightarrow \infty } D_{ \alpha}(m(A_{i}))=&\frac{\frac{1}{1-\alpha }log_{2}\left [ \sum_{i}(\frac{\frac{2^{|A_{i}|}-1}{\sum_{i}(2^{|A_{i}|}-1)}}{2^{|A_{i}|}-1|})^{\alpha }(2^{|A_{i}|}-1)\right ]}{log_{2}\sum_{i}(2^{|A_{i}|}-1)^{\frac{2^{|A_{i}|}-1}{\sum_{i}(2^{|A_{i}|}-1)}^{\alpha}}}\\=&\frac{\frac{1}{1-\alpha}log_{2}(3^{n}-2^{n})^{-\alpha}(\sum_{i}2^{|A_{i}|}-1)}{log_{2}(2^{|\Theta|}-1)}\\=&\frac{\frac{1}{1-\alpha}log_{2}(3^{n}-2^{n})^{1-\alpha}}{log_{2}2^{n}}\\\approx&1.585\,. \label{35}
\end{split}
\end{equation}
\end{proof}

Where $\lim_{n\rightarrow \infty }(2^{|A_{i}|}-1)^{\frac{2^{|A_{i}|}-1}{\sum_{i}(2^{|A_{i}|}-1)}^{\alpha}}=(2^{|A_{i}|}-1)^{(\frac{2^{|A_{i}|}-1}{3^{n}-2^{n}})^{\alpha}}\approx((2^{|A_{i}|}-1)^{0})^{\alpha}=1$ and there are  $(2^{|\Theta|}-1)$ terms.

\begin{theorem}
Let $\Theta$ has $n$ elements, a mass function is $m(\theta_{i})=\frac{1}{|\Theta|}=\frac{1}{n}$, $i=1,2,...,n$. $\lim_{n\rightarrow \infty }D_{\alpha}=1$, where $ \alpha \in R$.

\end{theorem}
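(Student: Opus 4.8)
The plan is to substitute the given assignment directly into Eq.~(\ref{32}) and exploit that every focal element is a singleton. Since $m$ places positive mass only on the $n$ singletons $\{\theta_i\}$, each focal element satisfies $|A_i|=1$, so $2^{|A_i|}-1=2^1-1=1$, and there are exactly $n$ focal elements, each carrying $m(A_i)=1/n$. The entire argument then reduces to evaluating two elementary sums, and I expect the claimed value to hold \emph{exactly} for every finite $n\ge 2$, not merely in the limit $n\to\infty$.

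For the numerator, substituting $2^{|A_i|}-1=1$ and $m(A_i)=1/n$ collapses the inner bracket to $\sum_{i=1}^{n}(1/n)^{\alpha}=n^{1-\alpha}$, so the prefactor $\frac{1}{1-\alpha}\log_2(\cdot)$ returns $\frac{1}{1-\alpha}\log_2 n^{1-\alpha}=\log_2 n$ for every $\alpha\neq 1$. For the denominator, each base $2^{|A_i|}-1$ equals $1$, so every summand $1^{\,m(A_i)^{\alpha}}$ equals $1$ irrespective of its exponent; the inner sum is therefore just $n$ and the denominator is again $\log_2 n$. Dividing gives $D_\alpha=\log_2 n/\log_2 n=1$ for all $\alpha\neq 1$ and all $n\ge 2$, whence $\lim_{n\to\infty}D_\alpha=1$ follows immediately.

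The single point needing care is the excluded value $\alpha=1$, where the factor $\frac{1}{1-\alpha}$ is undefined. There I would invoke Theorem~1: its numerator (the Deng entropy) becomes $-\sum_i (1/n)\log_2(1/n)=\log_2 n$ and its denominator is likewise $\log_2 n$, so the limiting value agrees with $1$. An even shorter route, which I would mention as a remark, is to observe that since all mass sits on singletons the assignment is simply the uniform probability distribution on $n$ points; Theorem~2 then yields $D_\alpha=d_\alpha$, and because the uniform distribution maximizes the R\'enyi entropy at $H_\alpha=\log_2 n$ one gets $d_\alpha=\log_2 n/\log_2 n=1$ directly. Because the identity $D_\alpha=1$ already holds at every finite stage, there is essentially no analytic obstacle here; the only things to watch are handling the $\alpha\to 1$ limit correctly and not mistakenly treating the base-$1$ powers in the denominator as if they depended on their exponents.
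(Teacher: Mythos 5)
Your proof is correct and follows essentially the same route as the paper's: both substitute $m(A_i)=1/n$ and $2^{|A_i|}-1=1$ into Eq.~(\ref{32}), reduce the numerator to $\frac{1}{1-\alpha}\log_2 n^{1-\alpha}=\log_2 n$ and the denominator to $\log_2 n$, and conclude $D_\alpha=1$. Your additional observations --- that the identity holds exactly for every finite $n$ (so the limit is trivial), that $\alpha=1$ needs separate treatment via Theorem~1, and that the result also follows from Theorem~2 since the assignment is the uniform distribution --- are all sound and in fact slightly more careful than the paper, which writes $\alpha\in R$ but implicitly assumes $\alpha\neq 1$ in the computation.
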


\begin{proof}
\begin{equation}
\begin{split}
     \lim_{n\rightarrow \infty } D_{ \alpha}(m(A_{i}))=&\frac{\frac{1}{1-\alpha }log_{2} \sum_{i}(\frac{n^{-1}}{2^{|\theta_{i}|}-1|})^{\alpha }(2^{|\theta_{i}|}-1)}{log_{2}\sum_{i}(2^{|\theta_{i}|}-1)^{(n^{-1})^{\alpha}}}\\=&\frac{\frac{1}{1-\alpha }log_{2} \sum_{i}({n^{-1}})^{\alpha }}{log_{2}\sum_{i}^{n}(1)}\\=&\frac{\frac{1}{1-\alpha}log_{2}n^{1-\alpha}}{log_{2}n}\\=&1\,. \label{36}
\end{split}
\end{equation}
\end{proof}

It should be noted that in this theorem, mass function degenerates to probability and the proposed multifractal dimension degenerates to Renyi information dimension. This theorem can be written as: the Renyi information dimension of uniform distribution is a fixed point 1 no matter what $\alpha$ is.
~\\

\noindent \textbf {Example 5}: Given a framework of discernment is $\Theta=\left \{\theta_{1},\theta_{2},\theta_{3} \right \}$. A mass function is $m(\theta_{1})=0.2,\ m(\theta_{2},\theta_{3})=0.8$. The results of multifractal dimension are shown in Table 3. The detailed calculations are as follows.

When $\alpha = 1$, according to Theorem 1, 

\begin{equation}
    D_{\alpha}=\frac{-(0.2\times log_{2}(\frac{0.2}{2^{1}-1})+0.8\times log_{2}(\frac{0.8}{2^{2}-1}))}{log_{2}((2^{1}-1)^{0.2}+(2^{2}-1)^{0.8})}=1.1249\,. \label{37}
\end{equation}

When $\alpha = 2$,

\begin{equation}
    D_{\alpha}=\frac{-\frac{1}{1-2}\times log_{2}\left [(\frac{0.2}{2^{1}-1})^{2}\times(2^{1}-1)+(\frac{0.8}{2^{2}-1})^{2}\times(2^{2}-1) \right ]}{log_{2}\left [ ((2^{1}-1)^{0.2})^{2}+((2^{2}-1)^{0.8})^{2}\right ]}=0.7163\,. \label{38}
\end{equation}

\begin{table}[pt]
\tbl{The result of Example 5.}
{\begin{tabular}{@{}cccccccc@{}} \toprule
& $\alpha=3$&
$\alpha=9$ &
$\alpha=15$ &
$\alpha=21$ &
$\alpha=27$ &
$\alpha=33$ &
$...$ \\

$D_{\alpha}$\hphantom{00} & \hphantom{0}0.5759 & \hphantom{0}0.2390 & \hphantom{0}0.1472 &\hphantom{0}0.1060 &\hphantom{0}0.0828 &\hphantom{0}0.0679 \\
\botrule
\end{tabular}}
\end{table}

As can be seen from Table 3, $D_{\alpha}$ goes to zero as $\alpha$ increasing.

\noindent \textbf {Example 6}: Given a framework of discernment is $\Theta$, $|\Theta|=2,...,n$. A mass function is $m_{i}(\Theta)=1$, $i=2,...,n$. The results are show in Table 4 and Fig. 6. Detailed calculations are as follows.

\begin{equation}
\begin{split}
     D_{\alpha}=&\frac{\frac{1}{1-\alpha}log_{2}(\frac{m(\Theta)}{2^{|\Theta|}-1})^{\alpha}(2^{|\Theta|}-1)}{log_{2}\left [ (2^{|\Theta|}-1)^{m(\Theta)}\right ]^{\alpha}}\\=& \frac{\frac{1}{1-\alpha}log_{2}(\frac{1}{2^{n}-1})^{\alpha-1}}{log_{2}\left [ (2^{n}-1)^{1}\right ]^{\alpha}}\\=&\frac{log_{2}(2^{n}-1)}{\alpha log_{2}(2^{n}-1)}\\=&\frac{1}{\alpha}\,. \label{39}
\end{split}
\end{equation}

From Eq.~(\ref{39}), the multifractal dimensions of $m(\Theta)=1$ are only related to the order $\alpha$ and as $\alpha$ increases, $D_{\alpha}$ decreases.

\begin{table}[pt]
\tbl{The result of Example 6.}
{\begin{tabular}{@{}ccccccccc@{}} \toprule
$|\Theta|=n $ $\setminus $ $\alpha$ & $1$&
$4$ &
$7$ &
$10$ &
$13$ &
$16$ &
$19$ & \\
$D_{\alpha}$\hphantom{00} & \hphantom{0}1 & \hphantom{0}0.25 & \hphantom{0}0.1429 &\hphantom{0}0.1 &\hphantom{0}0.0769 & \hphantom{0}0.0625 &\hphantom{0}0.0526 \\
\botrule
\end{tabular}}
\end{table}

\begin{figure}[th]
\centerline{\includegraphics[width=10cm]{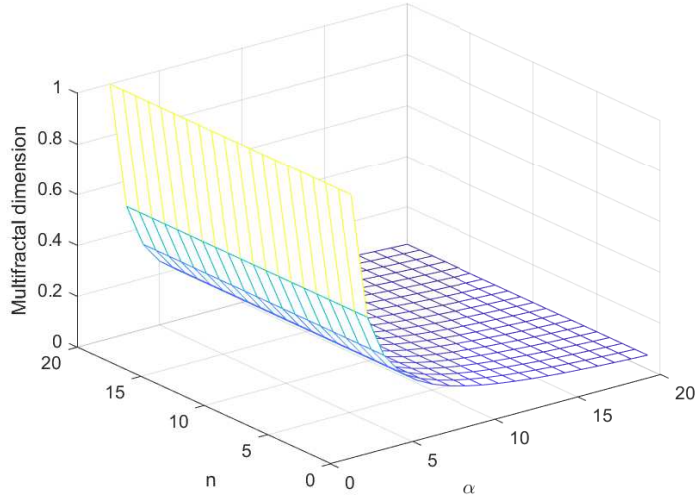}}
\vspace*{8pt}
\caption{The change of multifractal dimension of Example 6}
\end{figure}

\noindent \textbf {Example 7}: Given a framework of discernment is $\Theta$, $|\Theta|=2,...,n$. A  mass function is $m(\theta_{i})=\frac{1}{|\Theta|}$. The result is calculated as follows.

\begin{equation}
    d_{\alpha}=\frac{\frac{1}{1-\alpha}log_{2}\sum_{i}^{n}(\frac{1}{n})^{\alpha}}{log_{2}n}=\frac{log_{2}n}{log_{2}n}=1\,. \label{40}
\end{equation}

In this example, the mass function degenerates to probability distribution. According to Theorem 2, the proposed dimension degenerates to Renyi information dimension. From Eq.~(\ref{40}), the multifractal dimension of $m(\theta_{i})=\frac{1}{n}$ is 1. It doesn't depend on  $\alpha$ and $n$. This example illustrates Theorem 4.

\noindent \textbf {Example 8}: Given a framework of discernment is $\Theta$, $|\Theta|=2,...,n$. A mass function is $m(A_{i})=\frac{1}{2^{|\Theta|}-1)}$. The results are shown in Table 5 and Fig. 7.

\begin{table}[pt]
\tbl{The result of Example 8.}
{\begin{tabular}{@{}cccccccccc@{}} \toprule
$|\Theta| $ $\setminus $ $\alpha$ & $1$&
$5$ &
$9$ &
$13$ &
$17$ &
$21$ &
$25$ & 
$29$ & \\
$2$\hphantom{00} & \hphantom{0}1.1850 & \hphantom{0}0.5682 & \hphantom{0}0.3413 &\hphantom{0}0.2370 &\hphantom{0}0.1804 & \hphantom{0}0.1455 &\hphantom{0}0.1218 & \hphantom{0}0.1048 \\
$4$\hphantom{00} & \hphantom{0}1.3811 & \hphantom{0}0.9707 & \hphantom{0}0.8180 &\hphantom{0}0.7146 &\hphantom{0}0.6321 & \hphantom{0}0.5637 &\hphantom{0}0.5061 & \hphantom{0}0.4462 \\
$6$\hphantom{00} & \hphantom{0}1.4520 & \hphantom{0}1.0988 & \hphantom{0}1.0023 &\hphantom{0}0.9518 &\hphantom{0}0.9138 & \hphantom{0}0.8814 &\hphantom{0}0.8521 & \hphantom{0}0.8251 \\
$8$\hphantom{00} & \hphantom{0}1.4798 & \hphantom{0}1.1433 & \hphantom{0}1.0599 &\hphantom{0}1.0265 &\hphantom{0}1.0062 & \hphantom{0}0.9911 &\hphantom{0}0.9787 & \hphantom{0}0.9679 \\
$10$\hphantom{00} & \hphantom{0}1.4911 & \hphantom{0}1.1620 & \hphantom{0}1.0788 &\hphantom{0}1.0491 &\hphantom{0}1.0333 & \hphantom{0}1.0230 &\hphantom{0}1.0156 & \hphantom{0}1.0097 \\
$12$\hphantom{00} & \hphantom{0}1.4959 & \hphantom{0}1.1724 & \hphantom{0}1.0865 &\hphantom{0}1.0568 &\hphantom{0}1.0417 & \hphantom{0}1.0324 &\hphantom{0}1.0261 & \hphantom{0}1.0215 \\
$14$\hphantom{00} & \hphantom{0}1.4981 & \hphantom{0}1.1795 & \hphantom{0}1.0907 &\hphantom{0}1.0603 &\hphantom{0}1.0450 & \hphantom{0}1.0357 &\hphantom{0}1.0296 & \hphantom{0}1.0251 \\
$16$\hphantom{00} & \hphantom{0}1.4991 & \hphantom{0}1.1851 & \hphantom{0}1.0973 &\hphantom{0}1.0624 &\hphantom{0}1.0467 & \hphantom{0}1.0373 &\hphantom{0}1.0311 & \hphantom{0}1.0266 \\
$18$\hphantom{00} & \hphantom{0}1.4995 & \hphantom{0}1.1897 & \hphantom{0}1.0960 &\hphantom{0}1.0640 &\hphantom{0}1.0480 & \hphantom{0}1.0384 &\hphantom{0}1.0320 & \hphantom{0}1.0264 \\
$20$\hphantom{00} & \hphantom{0}1.4998 & \hphantom{0}1.1935 & \hphantom{0}1.0980 &\hphantom{0}1.0653 &\hphantom{0}1.0490 & \hphantom{0}1.0392 &\hphantom{0}1.0327 & \hphantom{0}1.0280 \\

\botrule
\end{tabular}}
\end{table}

\begin{figure}[th]
\centerline{\includegraphics[width=10cm]{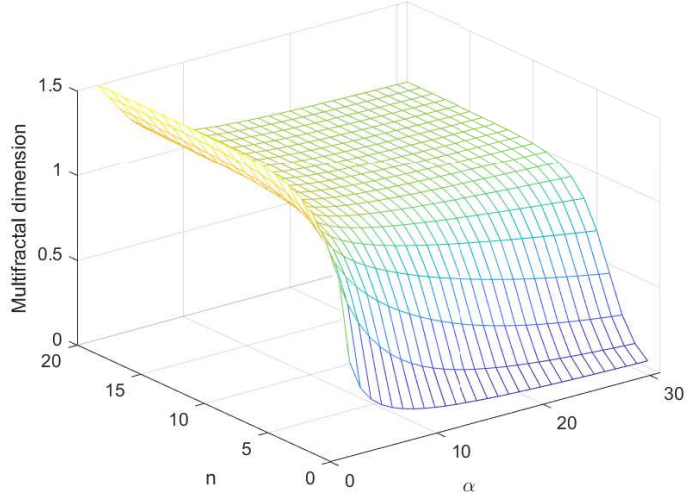}}
\vspace*{8pt}
\caption{The change of multifractal dimension of Example 8}
\end{figure}

As can be seen from Table 5, the multifractal dimension is getting closer and closer to 1 when the size of framework of discernment bigger with higher order $\alpha$. This rule of change can be seen more intuitively in Fig. 7. Actually, we can come to this conclusion directly from the definition as follows.

\begin{equation}
    \begin{split}
        \lim_{n \rightarrow \infty ,\alpha\rightarrow \infty  }D_{G\alpha}=&\frac{\frac{1}{1-\alpha }log_{2}\left [ \sum_{i}(\frac{\frac{1}{2^{n}-1}}{2^{|A_{i}|}-1|})^{\alpha }(2^{|A_{i}|}-1)\right ]}{log_{2}\sum_{i}(2^{|A_{i}|}-1)^{(\frac{1}{2^{n}-1})^{\alpha}}}\\ \approx&\frac{\frac{1}{1-\alpha}log_{2}\sum_{i}^{2^{n}-1}(\frac{1}{2^{n}-1})^{\alpha}}{log_{2}\sum_{i}^{2^{n}-1}(1)^{\alpha}}\\ \approx&\frac{\frac{1}{1-\alpha}log_{2}(2^{n}-1)^{1-\alpha}}{log_{2}(2^{n}-1)}\\=&1\,. \label{41}
    \end{split}
\end{equation}

\noindent \textbf {Example 9}: Given a framework of discernment is $\Theta$, $|\Theta|=2,...,n$. A mass function with maximum Deng entropy is $m(A_{i})=\frac{2^{|A_{i}|}-1}{\sum_{i}(2^{|A_{i}|}-1)}$. The multifractal dimensions are shown in Table 6 and Fig. 8.

\begin{table}[pt]
\tbl{The result of Example 9.}
{\begin{tabular}{@{}ccccccccc@{}} \toprule
$|\Theta| $ $\setminus $ $\alpha$ & $1$&
$4$ &
$7$ &
$10$ &
$13$ &
$16$ &
$19$ & \\
$2$\hphantom{00} & \hphantom{0}1.1752 & \hphantom{0}0.5809 & \hphantom{0}0.3473 &\hphantom{0}0.2441 &\hphantom{0}0.1878 & \hphantom{0}0.1526 &\hphantom{0}0.1285 \\
$4$\hphantom{00} & \hphantom{0}1.4699 & \hphantom{0}1.1962 & \hphantom{0}0.8893 &\hphantom{0}0.6589 &\hphantom{0}0.5124 & \hphantom{0}0.4172 &\hphantom{0}0.3515 \\
$6$\hphantom{00} & \hphantom{0}1.5516 & \hphantom{0}1.4904 & \hphantom{0}1.4065 &\hphantom{0}1.2899 &\hphantom{0}1.1437 & \hphantom{0}0.9919 &\hphantom{0}0.8575 \\
$8$\hphantom{00} & \hphantom{0}1.5747 & \hphantom{0}1.5611 & \hphantom{0}1.5457 &\hphantom{0}1.5275 &\hphantom{0}1.5052 & \hphantom{0}1.4770 &\hphantom{0}1.4406 \\
$10$\hphantom{00} & \hphantom{0}1.5816 & \hphantom{0}1.5784 & \hphantom{0}1.5750 &\hphantom{0}1.5715 &\hphantom{0}1.5677 & \hphantom{0}1.5637 &\hphantom{0}1.5593 \\
$12$\hphantom{00} & \hphantom{0}1.5838 & \hphantom{0}1.5830 & \hphantom{0}1.5822 &\hphantom{0}1.5814 &\hphantom{0}1.5806 & \hphantom{0}1.5797 &\hphantom{0}1.5789 \\
$14$\hphantom{00} & \hphantom{0}1.5846 & \hphantom{0}1.5844 & \hphantom{0}1.5842 &\hphantom{0}1.5840 &\hphantom{0}1.5838 & \hphantom{0}1.5836 &\hphantom{0}1.5834 \\
$16$\hphantom{00} & \hphantom{0}1.5848 & \hphantom{0}1.5848 & \hphantom{0}1.5847 &\hphantom{0}1.5847 &\hphantom{0}1.5846 & \hphantom{0}1.5846 &\hphantom{0}1.5845 \\
$18$\hphantom{00} & \hphantom{0}1.5849 & \hphantom{0}1.5849 & \hphantom{0}1.5849 &\hphantom{0}1.5849 &\hphantom{0}1.5849 & \hphantom{0}1.5848 &\hphantom{0}1.5848 \\
$20$\hphantom{00} & \hphantom{0}1.5849 & \hphantom{0}1.5849 & \hphantom{0}1.5849 &\hphantom{0}1.5849 &\hphantom{0}1.5849 & \hphantom{0}1.5849 &\hphantom{0}1.5849 \\
\botrule
\end{tabular}}
\end{table}

\begin{figure}[th]
\centerline{\includegraphics[width=10cm]{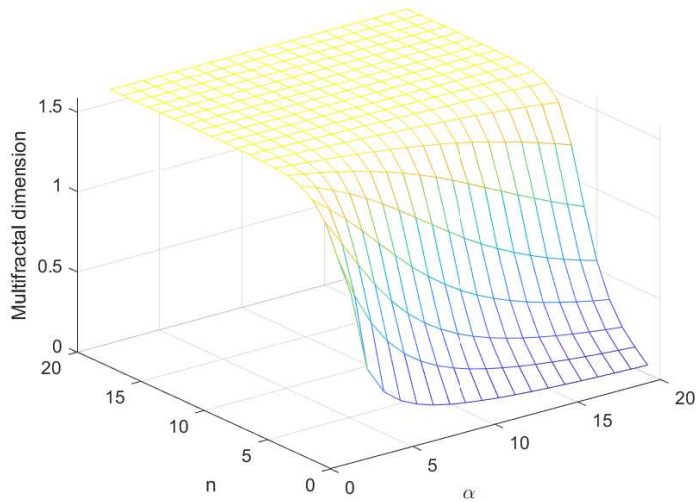}}
\vspace*{8pt}
\caption{The change of multifractal dimension of Example 9}
\end{figure}

It can be seen from Table 6 and Fig. 8 that when the order $\alpha$ is determined, the multifractal dimension increases and finally tends to a constant with the increase of the framework of discernment. When the size of $|\Theta|$ is determined, the multifractal dimension decreases with the improvement of $\alpha$. More specifically, when the size of the framework of discernment is small, the multifractal dimension will decrease to 0 finally. When $n$ get bigger, the tendency to decrease gets smaller and smaller until it doesn't change and remains $1.5850$. This example demonstrates Theorem 3.

\section{Conclusion}

In this paper, the multifractal spectrum of mass function is defined. Three special assignments are studied. The multifractal spectrum of mass function with maximum Deng entropy approximates quadratic function $F=-a(x-0.585)(x-1.585)$, where $a=\frac{4log_{2}C_{n}^{\frac{n}{2}}}{n}$. The multifractal spectrum of mass function with total uncertainty is the point (0,0) and with average assignment on power set is the point (1,1). Another important work is that we propose multifractal dimension of mass function. It noted that when mass function degenerates to probability and is distributed averagely, the proposed dimension degenerates to Renyi information dimension and is a constant 1. In addition, the multifractal dimension of mass function with maximum Deng entropy goes to 1.585 with the condition $\alpha \rightarrow \infty,\ |\Theta| \rightarrow \infty$. Other interesting properties are discussed. The changes of proposed dimension with differnet parameters $\alpha$ and the size of the framework of discernment $\Theta$ are shown by numerical examples. 

\section*{Acknowledgements}

The work is partially supported by National Natural Science Foundation of China (Grant No. 61973332), JSPS Invitational Fellowships for Research in Japan (Short-term).

\bibliographystyle{ws-fnl}
\bibliography{ws-fnl}

\end{document}